\newtheorem{lemma}{Lemma}
\newtheorem{proposition}{Proposition}
\DeclareMathOperator{\tr}{Tr}
\begin{document}
\title[]{Learning Equivariant Maps with Variational Quantum Circuits}

\author{Zachary P. Bradshaw, Ethan N. Evans, Matthew Cook, and Margarite L. LaBorde}
\affiliation{Naval Surface Warfare Center, Panama City, Florida 32407, USA}

\begin{abstract} Geometric quantum machine learning uses the symmetries inherent in data to design tailored machine learning tasks with reduced search space dimension. The field has been well-studied recently in an effort to avoid barren plateau issues while improving the accuracy of quantum machine learning models. This work explores the related problem of learning an equivariant map given two unitary representations of a finite group, which in turn allows the symmetric embedding of the data to be learned rather than simply required. Moreover, this procedure allows the learning of covariant quantum channels, which are an essential tool in quantum information theory. We demonstrate the feasibility of this task and give examples to illustrate the procedure.
\end{abstract}

\maketitle
\date{\today}

\section{Introduction}\label{sec:intro}

Quantum machine learning \cite{cao2020, grant2018, schuld2018, schuld2020circuit} has been a subject of vast interest in recent years for its feasibility in the era of noisy intermediate scale quantum (NISQ) devices \cite{preskill2018quantum}. It aims to combine the success of machine learning with the properties of quantum theory to perform learning in an efficient and accurate manner. While many avenues have been explored, a popular scheme is to make use of a variational quantum circuit in place of a neural network \cite{bishop2006,vapnik1999}. The parameters are then trained classically through one of the many available optimization routines.

A frequent issue that arises while training variational quantum algorithms is an overly large expressivity, which introduces barren plateaus into the training landscape \cite{arrasmith2022,cerezo2021,holmes2022,mcClean2018}. However, it has been shown that the natural symmetries in a given problem can be harnessed to alleviate this issue \cite{larocca2022,meyer2023,nguyen2022}, and geometric quantum machine learning, as discussed at length in \cite{ragone2023}, uses symmetries in this manner. In this framework, the data embedding is represented as an equivariant map between different representations. By building the circuit with only gates which respect the symmetry of the problem, we can be guaranteed that the resulting estimate completely respects this symmetry, thereby reducing the size of the space through which the model must search for a solution.  This process mimics the physics of Lagrangian mechanics, where symmetries in the evolution of a system simplify the corresponding equations of motion. In the same manner, geometric quantum machine learning can reduce the size of the solution space in QML \cite{larocca2022,meyer2023,nguyen2022}. Such symmetry-preserving circuits have previously proven useful in developing quantum eigensolvers \cite{SSY20,Gard2020,BGAMBE21,LXYB22}.

In this work, we show that the symmetric variational quantum circuit construction can be repurposed to search for equivariant maps between representations. We review the construction of symmetry-respecting variational quantum circuits and show how these methods can be inverted to learn an equivariant map between two given representations, assuming such a map exists. Examples are then provided to illustrate this methodology. 

As representation theory is a natural language for describing the symmetries inherent in a physical system, the methods presented here are of interest to physicists from a variety of subdisciplines. In particular, equivariant maps often take the form of covariant quantum channels \cite{holevo2002}. A quantum channel $\mathcal{N}:\mathcal{H}_A \to \mathcal{H}_B$ mapping between quantum states is said to be covariant with respect to a group $G$ with representations $W_A$ and $W_B$ if 
\begin{equation}\label{eq:covariant}
\mathcal{N}_{A\rightarrow B}\circ\mathcal{W}_{A}(g)=\mathcal{W}_{B}(g) \circ \mathcal{N}_{A\rightarrow B} \, ,
\end{equation}
for all $g\in G$, where $\mathcal{W}_A(g)\rho=W_A(g)\rho W_A(g)^\dagger$. Thus, our results facilitate the learning of covariant quantum channels \cite{CDP09,cemin2024machinelearningquantumchannels}. Physically, these channels describe the symmetric dynamics of a system. For Hamiltonian dynamics, these covariant symmetries can represent forbidden state transitions \cite{vojta2003quantum} or superselection rules \cite{Wick1952,Aharonov1967superselection}. In resource theory, covariant channels cannot increase asymmetry measures --- a fact which can be thought of as an extension of Noether's theorem for quantum states \cite{marvian2014Noether}. 

Bijective linear equivariant maps, sometimes referred to as intertwining maps, also form a natural notion of equivalence between representations \cite{hall2003,steinberg2011}, and the classification of the representations of a group according to this equivalence relation are of interest to mathematicians working in a variety of subdisciplines such as geometry, statistics, and representation theory itself. Thus, the ability to learn an intertwining map between equivalent representations is desirable, and the ability to give a measure of how equivariant a map is will be of interest to those studying maps which respect only a substructure between representations. 

Deep learning theorists also rely on equivariant maps to utilize data symmetries for improved performance even in classical machine learning \cite{mondal2023advances,ouderaa2023learning, minartz2023neuralsimulators,wang2022approximately}. Graph neural networks, for instance, must respect permutation symmetry, making learning equivariance an important task in that framework \cite{huang2023approx}. The task of learning equivariant maps for classical deep neural networks has been recently investigated in \cite{ouderaa2023learning,benton2020learning}, and naturally arises when considering symmetry-constrained dynamics \cite{wang2022approximately}.

The rest of this work is structured as follows: In Section~\ref{sec:sym}, we review the construction of symmetry-respecting variational quantum circuits, and we complement this theoretical discussion with a concrete example in Section~\ref{sec:class}, where we solve the toy problem of classifying vertical and horizontal lines in $2\times 2$ images with a single parameter in a variational quantum machine learning algorithm that respects the geometric structure of this problem. In Section~\ref{sec:equivariant}, we show how the methods of the previous section can be inverted to learn an equivariant map between two given representations, assuming such a map exists. Several examples are then provided to illustrate this methodology. In Section~\ref{sec:intertwine}, we show that this method can be extended to learning intertwining maps, despite the data embedding being nonlinear. Finally, we give concluding remarks in Section~\ref{sec:conclusion}.

\section{Symmetry Assisted Quantum Machine Learning}\label{sec:sym}

We begin by reviewing the mathematical foundations of this problem, restricting our attention to discrete symmetries for the scope of this work. A unitary representation $W:G\to \mathcal{U}(\mathcal{H})$ of a finite group $G$ is a group homomorphism of $G$ into the group of unitary matrices acting on the Hilbert space $\mathcal{H}$. The unitary condition is assumed so that quantum states are mapped to quantum states. When $W$ is injective, we call the representation faithful, as its image is isomorphic to $G$ by the first isomorphism theorem \cite[Chapter~3]{dummit2004}. Such a representation can therefore be thought of as a translation of the description of the group from its abstract group language into the language of quantum gates. Representations that are not faithful respect only a substructure of the original group $G$ in the sense that the image under $W$ is isomorphic to a proper subgroup of $G$.

We aim to apply representation theory to the study of classification problems. Given a data set $\mathcal{S}=\{(x,y)\}$ consisting of pairs of data points $x$ drawn from some data space $\mathcal{R}$ with labels $y$ in some set $\mathcal{L}$, we assume there is a function $f:\mathcal{R}\to\mathcal{L}$ which correctly maps a data point to its associated label, i.e. $f(x)=y$. We will make the assumption that $\mathcal{R}$ is endowed with a vector space structure. Typically, we approximate $f$ using a variational quantum circuit (VQC). Of course, this requires us to embed the data into a Hilbert space \cite{lloyd2020quantum}, and this alone is a nontrivial piece of the VQC, as the choice of embedding can have profound consequences on the trainability of the model (see \cite{laRose2020} for robustness results). For now, we will assume that an embedding $\mathcal{E}:\mathcal{R}\to\mathcal{U}(\mathcal{H})$ has been chosen, which by convention we always take to act on the initial all-zeroes state, $\ket{\vec{0}}$, which could in principle consist of multiple qubits.

The most common form for a VQC in the literature is given by an embedding followed by a parameterized unitary and a measurement. For this reason, we restrict ourselves to VQCs which produce an estimate of the form
\begin{equation}\label{eq:qml_eqn}
f_{\phi}(x)=\tr\left[U_\phi(\mathcal{E}(x)\ket{\vec{0}}\!\!\bra{\vec{0}}\mathcal{E}(x)^\dagger) U^\dagger_\phi\mathcal{O}\right],
\end{equation}
where $\mathcal{O}$ is some chosen observable, and $U_\phi$ is the parametrized unitary at the heart of the VQC. In QML, $f_{\phi}$ is then used to model the behavior of $f$ as closely as possible by ensuring that $\Vert f_\phi (x) - y\Vert$ is minimal for every data-label pair $(x,y)$.

Suppose now that the data respects some symmetry described by a group $G$ with representation $V:G\to GL(\mathcal{R})$, where $GL(\mathcal{R})$ is the general linear group on $\mathcal{R}$.
The target function $f$ is called $G$-invariant if
\begin{equation} 
f(V(g)x)=f(x)
\end{equation}
for all $g\in G$ and for all $x\in\mathcal{R}$. This phenomenon appears, for example, in many image classification problems where the label associated to the image does not change if the image is rotated. 

Since we are trying to estimate $f$ with $f_{\phi}$, it makes sense to introduce an inductive bias into the VQC model by demanding $G$-invariance. One way to accomplish this is by adding a penalty term to the loss function when the optimization routine is performed which penalizes the output when it lacks $G$-invariance. This property can be efficiently tested for on a quantum computer as in \cite{laBorde2023}; however, this often extends the training time of the model. An alternative approach is to build the symmetry directly into the VQC, and to do this, we need a notion of equivariance.

The representations $V$ and $W$ are called equivariant if there exists a map $T:\mathcal{R}\to\mathcal{H}$ (called an equivariant map) such that
\begin{equation}\label{eq:equivariance-condition} T\circ V(g)=W(g)\circ T,
\end{equation}
which bears a striking resemblance to the covariance condition given in \eqref{eq:covariant}.
When $T$ is bijective and linear, we call it an isomorphism of the representations, and the representations themselves are called equivalent. If we are given the representation $V:G\to GL(\mathcal{R})$ and a map $T:\mathcal{R}\to\mathcal{H}$, then a representation $W:G\to\mathcal{U}(\mathcal{H})$ is said to be induced by $T$ whenever $T$ is equivariant. 

Additionally, an operator $A$ is called $G$-invariant when
\begin{equation} [W(g),A]=0
\end{equation}
for all $g\in G$. That is, $A$ belongs to the commutant of $W(G)$. This fact is important in the coming derivations, as we enforce symmetry directly in the VQC by ensuring $G$-invariance throughout the circuit.

In the context of a VQC, we are given a problem with some symmetry described by a representation $V$. If we are also given an embedding $\mathcal{E}$, then a representation $W$ induced by $\mathcal{E}$ encodes the symmetry of the problem in the Hilbert space picture. In other words, the embedding carries the symmetry structure from the data space to the Hilbert space so that this structure can be respected by the VQC. Strictly speaking, the embedding is a map $\mathcal{E}:\mathcal{R}\to\mathcal{U}(\mathcal{H})$ which by convention acts on the all-zeroes state $\ket{\vec{0}}$, thus the map $T$ in \eqref{eq:equivariance-condition} is really given by $T(x)=\mathcal{E}(x)\ket{\vec{0}}$. When we say that $W$ is induced by the embedding, what we really mean is that $W$ is induced by the map $T:\mathcal{R}\to\mathcal{H}$, which is defined by the embedding. If instead we are given two representations and we want to guarantee that the embedding is an equivariant map between them, then this is a non-trivial task which we take up in Section \ref{sec:equivariant}.

\begin{proposition} \label{prop:symmetric} Let $V$ be a representation of $G$ on the data space $\mathcal{R}$ and let $\mathcal{E}$ be an embedding of the data space into $\mathcal{H}$. Let $W$ be a representation induced by the embedding. Then the estimate $f_\phi$ defined by \eqref{eq:qml_eqn} is $G$-invariant if the parameterized unitary $U_\phi$ and the observable $\mathcal{O}$ are $G$-invariant. That is, 
\begin{equation}f_\phi(V(g)x)=f_\phi(x)
\end{equation}
for all $x\in\mathcal{R}$ and $g\in G$ whenever $U_\phi$ and $\mathcal{O}$ are $G$-invariant with respect to $W$.
\end{proposition}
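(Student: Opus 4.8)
The plan is to substitute the definition of $f_\phi$ from \eqref{eq:qml_eqn} evaluated at the transformed data point $V(g)x$, and then push the group action through the circuit using the two commutation hypotheses until it cancels against itself. First I would record the consequence of the assumption that $W$ is induced by the embedding: by definition this means the map $T(x)=\mathcal{E}(x)\ket{\vec{0}}$ satisfies the equivariance condition \eqref{eq:equivariance-condition}, i.e. $\mathcal{E}(V(g)x)\ket{\vec{0}}=W(g)\mathcal{E}(x)\ket{\vec{0}}$. Passing to the associated rank-one density operator gives
\begin{equation}
\mathcal{E}(V(g)x)\ket{\vec{0}}\!\!\bra{\vec{0}}\mathcal{E}(V(g)x)^\dagger=W(g)\,\mathcal{E}(x)\ket{\vec{0}}\!\!\bra{\vec{0}}\mathcal{E}(x)^\dagger\,W(g)^\dagger,
\end{equation}
so the sole effect of replacing $x$ by $V(g)x$ inside the trace is to conjugate the embedded state by $W(g)$.

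Next I would insert this into \eqref{eq:qml_eqn} and invoke the two $G$-invariance hypotheses. Since $[W(g),U_\phi]=0$, the factors $W(g)$ and $W(g)^\dagger$ commute past $U_\phi$ and $U_\phi^\dagger$, yielding $f_\phi(V(g)x)=\tr[W(g)\,U_\phi(\mathcal{E}(x)\ket{\vec{0}}\!\!\bra{\vec{0}}\mathcal{E}(x)^\dagger)U_\phi^\dagger\,W(g)^\dagger\,\mathcal{O}]$. Applying cyclicity of the trace to move the leading $W(g)$ to the rear produces the factor $W(g)^\dagger\mathcal{O}W(g)$, and the hypothesis $[W(g),\mathcal{O}]=0$ together with unitarity $W(g)^\dagger W(g)=I$ collapses this to $\mathcal{O}$. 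What remains is precisely $\tr[U_\phi(\mathcal{E}(x)\ket{\vec{0}}\!\!\bra{\vec{0}}\mathcal{E}(x)^\dagger)U_\phi^\dagger\mathcal{O}]=f_\phi(x)$, which is the desired identity.

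The algebra itself is routine once the substitution is in place; the only genuine content, and the step I would be most careful to justify, is the very first one, where the $V(g)$ action on the data space is converted into a $W(g)$ conjugation on the Hilbert space. This conversion is exactly what the assumption ``$W$ is induced by the embedding'' provides through \eqref{eq:equivariance-condition}, and it is worth emphasizing that the conclusion would fail for an arbitrary pair $(V,W)$: the proposition is meaningful only because $T$ is equivariant. Everything afterward is a formal manipulation relying on the commutation relations, the cyclic property of the trace, and the unitarity of $W(g)$.
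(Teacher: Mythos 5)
Your proposal is correct and follows essentially the same route as the paper's proof: convert the $V(g)$ action into a $W(g)$ conjugation via the induced-representation hypothesis, commute $W(g)$ past $U_\phi$, and cancel it against $\mathcal{O}$ using cyclicity of the trace and unitarity. Your explicit remark that the equivariance identity holds only after applying the embedding to $\ket{\vec{0}}$ (rather than as an operator identity) is a slightly more careful reading of ``induced by the embedding'' than the paper states, but the argument is otherwise identical.
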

\begin{proof} Since $W$ is induced by $\mathcal{E}$, we have
\begin{align} f_\phi(V(g)x)
&=\tr\left[U_\phi\mathcal{E}(V(g)x)\ket{0}\!\!\bra{0}\mathcal{E}(V(g)x)^\dagger U^\dagger_\phi\mathcal{O}\right]\\
&=\tr\left[U_\phi W(g)\mathcal{E}(x)\ket{0}\!\!\bra{0}\mathcal{E}(x)^\dagger W(g)^\dagger U^\dagger_\phi\mathcal{O}\right].
\end{align}
Now applying the $G$-invariance of $U_\phi$ produces
\begin{align}
&=\tr\left[W(g)U_\phi \mathcal{E}(x)\ket{0}\!\!\bra{0}(\mathcal{E}(x))^\dagger  U^\dagger_\phi (W(g))^\dagger\mathcal{O}\right],
\end{align}
and the cyclicity of the trace, the unitarity of $W$, and the $G$-invariance of $\mathcal{O}$ together yield
\begin{align}
&=\tr\left[U_\phi \mathcal{E}(x)\ket{0}\!\!\bra{0}(\mathcal{E}(x))^\dagger  U^\dagger_\phi (W(g))^\dagger\mathcal{O}W(g)\right]\\
&=\tr\left[U_\phi \mathcal{E}(x)\ket{0}\!\!\bra{0}(\mathcal{E}(x))^\dagger  U^\dagger_\phi \mathcal{O}\right]\\
&=f_\phi(x).
\end{align}
This completes the proof.
\end{proof}
Through Proposition~\ref{prop:symmetric}, we see that we may enforce $G$-invariance in $f_\phi$ by using only gates and observables which commute with the representation induced by the embedding, but how do we come up with a satisfactory set of such operations? The following lemma is helpful.

\begin{lemma}\label{lemma:commutator} If $[X,Y]=0$, then $[e^{i\phi X},Y]=0$ for all $\phi\in\mathbb{R}$.
\end{lemma}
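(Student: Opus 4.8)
The plan is to exploit the power-series definition of the matrix exponential and reduce the claim to the commutativity of $Y$ with each power of $X$. First I would write $e^{i\phi X}=\sum_{n=0}^\infty \frac{(i\phi)^n}{n!}X^n$, which converges absolutely for any fixed matrix $X$ and any $\phi\in\mathbb{R}$, so that the exponential is a genuine limit of polynomials in $X$. This reframes the problem as showing that $Y$ commutes with each term of this series and that commutativity survives the limit.

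Next, I would establish by induction on $n$ that the hypothesis $[X,Y]=0$ forces $[X^n,Y]=0$ for every $n\geq 0$. The base cases $n=0$, where $X^0=I$ trivially commutes with $Y$, and $n=1$, which is the hypothesis itself, are immediate. For the inductive step I would use the Leibniz-type identity $[X^{n+1},Y]=X[X^n,Y]+[X,Y]X^n$, in which the first term vanishes by the inductive hypothesis and the second vanishes by assumption, so that $[X^{n+1},Y]=0$.

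With termwise commutativity in hand, I would conclude by appealing to the linearity and continuity of the commutator map $A\mapsto [A,Y]=AY-YA$. Because this map is linear and bounded, it may be interchanged with the absolutely convergent exponential series, yielding $[e^{i\phi X},Y]=\sum_{n=0}^\infty\frac{(i\phi)^n}{n!}[X^n,Y]=0$.

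The only delicate point is justifying the interchange of the commutator with the infinite sum, which I expect to be the main (though mild) obstacle. In the finite-dimensional setting relevant here it is handled cleanly: the absolute convergence of the exponential series together with the boundedness of left- and right-multiplication by $Y$ guarantees that every partial sum commutes with $Y$, and passing to the limit preserves the vanishing of the commutator.
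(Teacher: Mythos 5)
Your proposal is correct and follows essentially the same route as the paper's proof: expand $e^{i\phi X}$ as its power series and show by induction, via the identity $[X^{n+1},Y]=X[X^n,Y]+[X,Y]X^n$, that every term of the resulting series for the commutator vanishes. Your added care in justifying the termwise interchange of the commutator with the absolutely convergent series is a welcome refinement, but it does not change the substance of the argument.
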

\begin{proof} The proof is a simple computation using the Taylor series for the exponential. We have
\begin{align} [e^{i\phi X},Y]&=\left[\sum_{n=0}^\infty \frac{i^n\phi^n X^n}{n!},Y\right]=\sum_{n=0}^\infty \frac{i^n\phi^n}{n!}[X^n,Y].
\end{align}
Since $[X^n,Y]=X^{n-1}[X,Y]+[X^{n-1},Y]X$, proceeding by induction will produce a sum of terms involving at least one power of $[X,Y]$, which vanishes by assumption. Thus, the last equality vanishes.
\end{proof}

By Stone's theorem \cite{hall2013quantum}, for every strongly continuous one-parameter unitary group $U_\phi$, there is a self-adjoint operator $H$ such that $U_\phi=e^{i\phi H}$. This is why the lemma is so useful; it tells us that we can find parametrized gates that commute with the representation induced by the embedding by checking that the infinitesimal generator commutes with this representation. For example, if $[Z,W(g)]=0$ for all $g\in G$, then every rotation gate along the $Z$-axis is $G$-invariant.

Given a set of gates, we can produce a set of $G$-invariant gates by projecting onto the set of $G$-invariant operators, which is accomplished by a twirling operation.

\begin{lemma}\label{lemma:twirl} Let $X$ be a self-adjoint operator and let $W$ be a representation of $G$ as before. Then the twirl of $X$
\begin{equation} \mathcal{T}_G(X)=\frac{1}{|G|}\sum_{g\in G}W(g)XW(g)^\dagger
\end{equation}
is self-adjoint and $G$-invariant.
\end{lemma}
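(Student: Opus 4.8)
The plan is to verify the two claimed properties separately, each by a short direct computation. For self-adjointness, I would take the Hermitian conjugate of the defining sum and distribute it term by term. Since the dagger is an anti-homomorphism, $(W(g)XW(g)^\dagger)^\dagger = W(g)X^\dagger W(g)^\dagger$, and because $X$ is self-adjoint by hypothesis we have $X^\dagger = X$, so each summand is left unchanged. Hence $\mathcal{T}_G(X)^\dagger = \mathcal{T}_G(X)$. The only facts invoked are the conjugation rule for products and $X = X^\dagger$.

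For $G$-invariance, the key idea is that averaging over the whole group is insensitive to left translation. Concretely, I would conjugate the twirl by an arbitrary $W(h)$ and reindex the sum:
\begin{equation}
W(h)\mathcal{T}_G(X)W(h)^\dagger = \frac{1}{|G|}\sum_{g\in G}W(hg)X\,W(hg)^\dagger,
\end{equation}
where I have used that $W$ is a homomorphism, so $W(h)W(g) = W(hg)$ and likewise $W(g)^\dagger W(h)^\dagger = (W(h)W(g))^\dagger = W(hg)^\dagger$. Setting $g' = hg$ and noting that left multiplication by $h$ is a bijection of $G$, the index $g'$ ranges over all of $G$, so the right-hand side is exactly $\mathcal{T}_G(X)$. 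Thus $W(h)\mathcal{T}_G(X)W(h)^\dagger = \mathcal{T}_G(X)$, and multiplying on the right by $W(h)$ (invertible since $W$ is unitary) yields $W(h)\mathcal{T}_G(X) = \mathcal{T}_G(X)W(h)$, i.e. $[W(h),\mathcal{T}_G(X)] = 0$ for every $h \in G$. This is precisely $G$-invariance in the sense defined earlier.

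I do not expect a serious obstacle here, as the computation is routine. The one step deserving care is the reindexing: I must justify that $g \mapsto hg$ permutes the finite group $G$ (the rearrangement property of the group average) and that unitarity gives $W(g)^\dagger = W(g^{-1})$, so that the conjugated summands genuinely recombine into the original sum. Everything else follows from $W$ being a unitary representation and from the finiteness of $G$, which guarantees that the normalized average $\tfrac{1}{|G|}\sum_{g\in G}$ is well defined.
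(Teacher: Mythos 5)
Your proof is correct and follows essentially the same route as the paper's: self-adjointness by distributing the dagger through the sum and using $X^\dagger = X$, and $G$-invariance by exploiting that left translation $g \mapsto hg$ permutes the finite group so the averaged sum is unchanged. The only cosmetic difference is that you conjugate by $W(h)$ on both sides and then peel off the right factor, whereas the paper multiplies on the left only and tracks the residual $W(g)$ through the reindexing; the underlying rearrangement argument is identical.
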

\begin{proof} To see that $\mathcal{T}_G(X)$ is self-adjoint, observe that
\begin{align} (\mathcal{T}_G(X))^\dagger&=\frac{1}{|G|}\sum_{g\in G} \left(W(g) X W(g)^\dagger\right)^\dagger\\
&=\frac{1}{|G|}\sum_{g\in G}W(g) X^\dagger W(g)^\dagger\\
&=\frac{1}{|G|}\sum_{g\in G}W(g) X W(g)^\dagger=\mathcal{T}_G(X).
\end{align}
To see that $\mathcal{T}_G(X)$ is $G$-invariant, let $g\in G$ and notice that
\begin{align} W(g)\mathcal{T}_G(X)&=\frac{1}{|G|}W(g)\sum_{g'\in G} W(g')XW(g')^\dagger\\
&=\frac{1}{|G|}\sum_{g'\in G} W(gg')XW((g')^{-1}).
\end{align}
Re-indexing with $h=gg'$ produces
\begin{align} W(g)\mathcal{T}_G(X)&=\frac{1}{|G|}\sum_{h\in G} W(h)XW(h^{-1}g)\\
&=\frac{1}{|G|}\sum_{h\in G}W(h)XW(h)^\dagger W(g)\\
&=\mathcal{T}_G(X)W(g),
\end{align}
and this completes the proof.
\end{proof} 

This lemma gives a technique for constructing VQCs which respect the symmetry of a problem. First find a representation $W$ induced by the embedding $\mathcal{E}$. Then construct a $G$-invariant gate set by twirling self-adjoint operators and using them as the infinitesimal generators of the gate set. Finally, choose an observable which is $G$-invariant. By Proposition~\ref{prop:symmetric}, the estimate produced by the model will now be $G$-invariant.  Given any three of these building blocks ---the embedding, the initial group representation, the embedded representation, and the observable measured--- the remaining piece could in principle be learned or determined to suit the task at hand, and we will return to this point in Section~\ref{sec:equivariant}.

\subsection{Example: Classifying Lines in Images}\label{sec:class}
Let us demonstrate how to construct VQCs which respect the symmetry of a problem through a simple example. Consider the problem of classifying vertical and horizontal lines in $2\times2$ grayscale images (see Figure~\ref{fig:samps}) which are easily generated in Python.

\begin{figure}[h]
\centering
\includegraphics[
width=1.5in
]{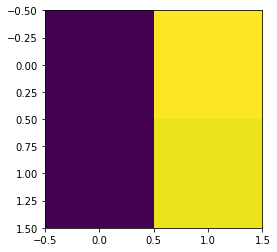}
\includegraphics[
width=1.5in
]{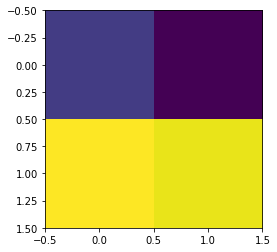}
\caption{Data samples with vertical line (left) and horizontal line (right). We are aiming to detect the lines formed by the yellow pixels. Note that the pixels are actually grayscale, and the coloring is added to improve visualization.}
\label{fig:samps}
\end{figure}

An image of a vertical line is still an image of a vertical line when it is rotated by $180\degree$. Similarly, the orientation of a horizontal line is preserved by this rotation. In fact, we also preserve the same labeling when the image is flipped about a vertical line through its midpoint. Each of these operations has order two since they undo themselves; thus, this image classification problem has symmetry group $G=C_2\times C_2$, where $C_2$ is the cyclic group of order two. 

There is a one-to-one correspondence between the $2\times2$ matrix of pixel values in the image and the vector obtained by flattening this matrix, and we take the collection of the latter as our data space. Then a rotation by 180 degrees is accomplished by the matrix $X\otimes X$ and a reflection is accomplished by the matrix $I\otimes X$ where $X$ is the Pauli-$X$ gate. Indeed, we have
\begin{equation}
\begin{pmatrix} 0&0&0&1\\0&0&1&0\\0&1&0&0\\1&0&0&0
\end{pmatrix}
\begin{pmatrix} a\\b\\c\\d
\end{pmatrix}
=\begin{pmatrix}d\\c\\b\\a
\end{pmatrix},
\end{equation}
and mapping back to the $2\times2$ matrix of pixel values shows that the image has been rotated by 180 degrees. Similarly,
\begin{equation}
\begin{pmatrix} 0&1&0&0\\1&0&0&0\\0&0&0&1\\0&0&1&0
\end{pmatrix}
\begin{pmatrix} a\\b\\c\\d
\end{pmatrix}
=\begin{pmatrix}b\\a\\d\\c
\end{pmatrix},
\end{equation}
and mapping back to the $2\times2$ matrix of pixel values shows that the image has been reflected about the vertical axis passing through the midpoint. Thus, the $C_2\times C_2$ symmetry is accomplished by the representation $V$ mapping $(1,0)\to X\otimes X$ and $(0,1)\to I\otimes X$, where $1$ is the generator of $C_2$, the rotation is identified with $(1,0)$, and the flip is identified with $(0,1)$. It is not hard to see that this representation is faithful; it respects the full structure of the group. Note that a flip about the \textit{horizontal} axis passing through the midpoint is also a symmetry, but this is not an independent symmetry, as it can be built from the rotation and flip operations; in fact, this horizontal flip corresponds to the element $(1,1)$ in $C_2\times C_2$ which is represented by $X\otimes I$. 

The simplest nontrivial embedding for this problem is the amplitude embedding, wherein a data point $(a,b,c,d)^T$ is mapped to the quantum state with the same coefficients (up to normalization) in the computational basis:
\begin{equation} (a,b,c,d)^T\to\frac{a\ket{00}+b\ket{01}+c\ket{10}+d\ket{11}}{\sqrt{a^2+b^2+c^2+d^2}}.
\end{equation}
With this embedding, the induced representation is the same as the representation $V$, so that $G$-invariant operators need only commute with $I\otimes X$ and $X\otimes X$. Equivalently, we could take the vertical and horizontal flips to be the generators of $C_2\times C_2$, so that a $G$-invariant operator need only commute with $I\otimes X$ and $X\otimes I$. Clearly the generators themselves satisfy this property since $(I\otimes X)(X\otimes I)=X\otimes X=(X\otimes I)(I\otimes X)$. Thus, the rotations by an arbitrary angle about these axes are also $G$-invariant by Lemma~\ref{lemma:commutator}, and so we can use them to parametrize our circuit. These rotations correspond to $I\otimes R_X(\theta)$ and $R_X(\theta)\otimes I$. In fact, by invoking our commutation relations, it is easily shown that the only two-qubit $G$-invariant operators have the form $aI\otimes I+bI\otimes X+cX\otimes I+dX\otimes X$ for some constants $a,b,c,d$. Since each of the terms commutes with the rest, the exponentiation splits into a product of four exponentials by the Baker-Campbell-Hausdorff formula. That is, the exponential $\exp(i\phi(aI\otimes I+bI\otimes X+cX\otimes I+dX\otimes X))$ is equivalently given by
\begin{align}
    e^{i\phi aI\otimes I}e^{i\phi bI\otimes X}e^{i\phi cX\otimes I}e^{i\phi dX\otimes X}.
\end{align} Each of the remaining exponentials is a tensor product of local operations (specifically, identity operations and $X$-rotations). Thus, there are no entanglement generating operations which preserve the symmetry of the problem, and the gateset is not universal. 

Next, we will adjust this embedding slightly by introducing a readout qubit. The embedding can now be thought of as the tensor product of the amplitude embedding with an identity mapping on the readout qubit, which is initialized to $\ket{0}$. The induced representation $W$ is now defined by
\begin{equation} \left(\mathcal{E}_{amp}\otimes I\right)(V(g)x)\ket{\vec{0}}=W(g)\left(\mathcal{E}_{amp}\otimes I\right)(x)\ket{\vec{0}},
\end{equation}
so that $V$ and $W$ are again identical, although this time they act on three qubits. We must find self-adjoint operations that commute with $X\otimes I\otimes I$ and $I\otimes X\otimes I$. Clearly $I\otimes I\otimes H$ satisfies this property for any single qubit self-adjoint operator $H$. In particular, we see that we may make our measurement with respect to $Z$ on the readout qubit at the end of the circuit while still satisfying the hypothesis of Proposition~\ref{prop:symmetric}. The $X$ gate commutes with these operations, no matter which qubit it acts on, so we are free to use $R_x(\theta)$ gates in the construction of the variational circuit. The self-adjoint operation $I\otimes I\otimes Z+I\otimes X\otimes X$ also commutes with both $X\otimes I\otimes I$ and $I\otimes X\otimes I$. Moreover, $[Z,X]\ne0$, so that the exponentiation of $I\otimes I\otimes Z+I\otimes X\otimes X$ is not a product of local operations and can produce entanglement between a data qubit and the readout qubit. A similar conclusion can be made for the operation $I\otimes I\otimes Z+X\otimes I\otimes X$. The circuit is therefore made symmetric by measuring the readout qubit with respect to $Z$ (or any other single qubit unitary) and building the variational circuit from the gate set $\{R_X(\theta), e^{i\phi\left(I\otimes I\otimes Z+I\otimes X\otimes X\right)},e^{i\phi\left(I\otimes I\otimes Z+X\otimes I\otimes X\right)}\}$.

An example circuit for solving the horizontal/vertical line classification problem with a built in symmetry bias is shown in Figure~\ref{fig:2x2}. This circuit has a single trainable parameter and solves the problem with 100\% accuracy after a standard training procedure.

\begin{figure}[h]
\centering
\includegraphics[
width=\columnwidth
]{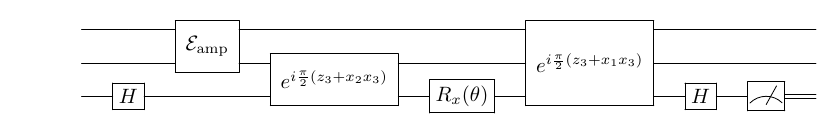}
\caption{Example of symmetric circuit for solving vertical/horizontal line classification problem. For brevity, we have used the notation $z_i$ to mean the $Z$ operator acting on the $i$-th component of the tensor product space.}
\label{fig:2x2}
\end{figure}

\section{Learning Equivariant Maps}\label{sec:equivariant}
 In the previous section, we reviewed the use of symmetry in VQCs to reduce the size of the parameter search space. In this section, we tackle the separate problem of learning an equivariant map between two given representations. We do so by leveraging the symmetric VQC construction developed above. Our aim is to build this equivariant map by forcing the estimate, $h_\theta$, produced by a new VQC to respect the same $G$-invariance as $f$. This procedure allows us to learn the desired embedding which can then be used for any classification tasks desired. The results that follow focus on this goal.

Given the representation $V:G\to GL(\mathcal{R})$ describing the symmetry respected by the data space and another representation $W:G\to \mathcal{U}(\mathcal{H})$, we can learn an equivariant map between them, assuming one exists. In the context of the symmetry-assisted VQC outlined above, this equivariant map appears as the data embedding, so that we are learning a data embedding which properly respects the symmetry of the data. We begin by embedding the data with a parameterized unitary $\mathcal{E}_\theta(x)$ and follow this with a unitary which is $G$-invariant with respect to $W$. The expected value of the circuit will then be $G$-invariant with respect to $V$ whenever the parameterized unitary is an equivariant map between $V$ and $W$ and the observable is $G$-invariant with respect to the $W$ representation. To train the equivariance, we sample from both the data space and the group and minimize the loss function 
\begin{equation}\label{eq:loss}
  L_g(x)=( h_\theta(V(g)x)-h_\theta(x)) ^2 \, .  
\end{equation}

Thus, learning an equivariant map amounts to performing a regression minimizing $L_g(x)$ for any pair $(g,x)\in G\times \mathcal{R}$. In fact, it suffices to sample over the generators of $G$ since invariance under the action of the generators implies invariance under the action of any other group element. It should be noted that the value of $L_g(x)$ can be interpreted as a measure of how equivariant the embedding $\mathcal{E}_\theta$ is; that is, how well it transfers the notion of symmetry from one representation to another. In most cases, $L_g(x)$ is not actually zero, and so we are really learning maps which are in a sense \textit{almost} equivariant.  By minimizing this loss, we can achieve an embedding that respects the symmetry of our problem. This learned embedding can now be used to construct a symmetric VQC for the purposes of approximating the target function $f$ as discussed in Section~\ref{sec:sym}.

In the experiments that follow, we train to 150 epochs with a batch size of 100 and a learning rate of 0.1, and we sample only from a generating set for each group. We simulate our quantum circuits using PennyLane \cite{bergholm2018pennylane} and optimize using vanilla gradient descent with the gradient computed by PennyLane's built-in SPSA \cite{spall1998overview} method, which shifts all parameters of the VQC simultaneously and then approximates the gradient using these shifts and a finite-difference method. The data, which is shuffled at every epoch, is generated in Python.

\subsection*{Example: $C_2$ Symmetry}

\begin{figure}[h!]
\centering
\includegraphics[
width=3.5in]{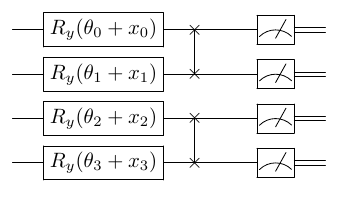}
\caption{Circuit for learning equivariance between two representations of $C_2$. Each rotation gate ($R_y$) parameterized by an angle $\theta_i$ that is then learned.}
\label{fig:c2_circ}
\end{figure}

Consider again the problem of classifying vertical and horizontal lines in a $2\times2$ image and recall that the classification of a line is invariant under a flip across the vertical axis. Since this flip is its own inverse operation, the group it generates is the cyclic group of order two. Thus, we have a representation $V$ of $C_2$ in the dataspace. On the other hand, the operation $\textnormal{SWAP}_{01}\textnormal{SWAP}_{23}$ generates a representation $W$ of $C_2$ in the Hilbert space setting. We can attempt to find an equivariant map between these representations by introducing a parameterized embedding $\mathcal{E}_\theta(x)=R_Y(\theta_0+x_0)\otimes R_Y(\theta_1+x_1)\otimes R_Y(\theta_2+x_2)\otimes R_Y(\theta_3+x_3)$, where $\theta=(\theta_0,\theta_1,\theta_2,\theta_3)$ and $x=(x_0,x_1,x_2,x_3)$. 

Note that the architecture of the embedding is a free parameter; we have a choice of what gates to allow in its construction. Here, we choose a simple tensor product of rotation gates to easily showcase our procedure. Let $g$ be the generator of $C_2$ and observe that $ \mathcal{E}_\theta\circ V(g)(x) $ is given by
\begin{equation}
R_Y(\theta_0+x_1)\otimes R_Y(\theta_1+x_0)\otimes R_Y(\theta_2+x_3)\otimes R_Y(\theta_3+x_2).
\end{equation}
On the other hand, $W(g)\circ \mathcal{E}_\theta(x)$ is given by 
\begin{equation} R_Y(\theta_1+x_1)\otimes R_Y(\theta_0+x_0)\otimes  R_Y(\theta_3+x_3)\otimes R_Y(\theta_2+x_2).
\end{equation}

\begin{figure}
    \centering
    \includegraphics[width=3.5in]{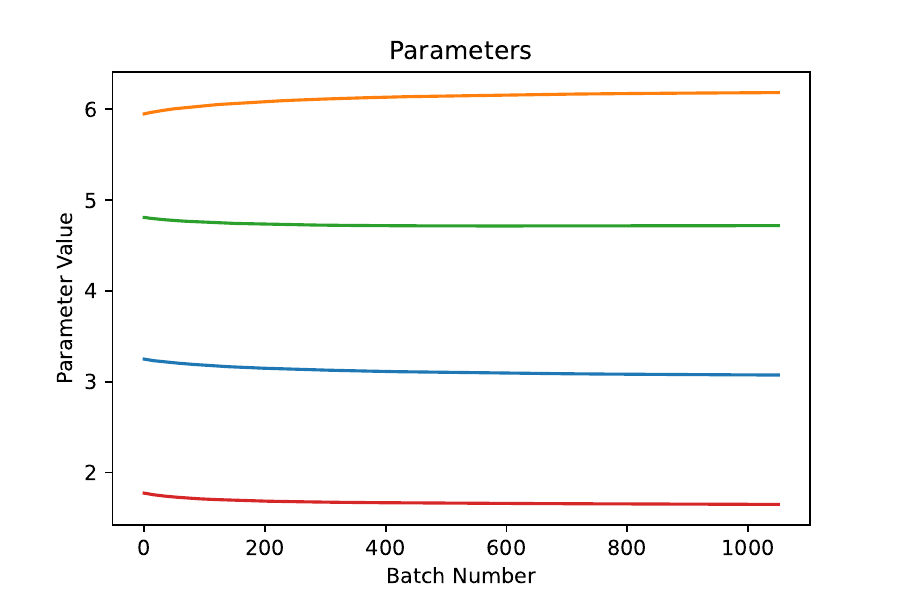}
    \caption{Parameter curves for $C_2$ symmetry}
    \label{fig:c2params}
\end{figure}

\begin{figure}
    \centering
    \includegraphics[width=3.5in]{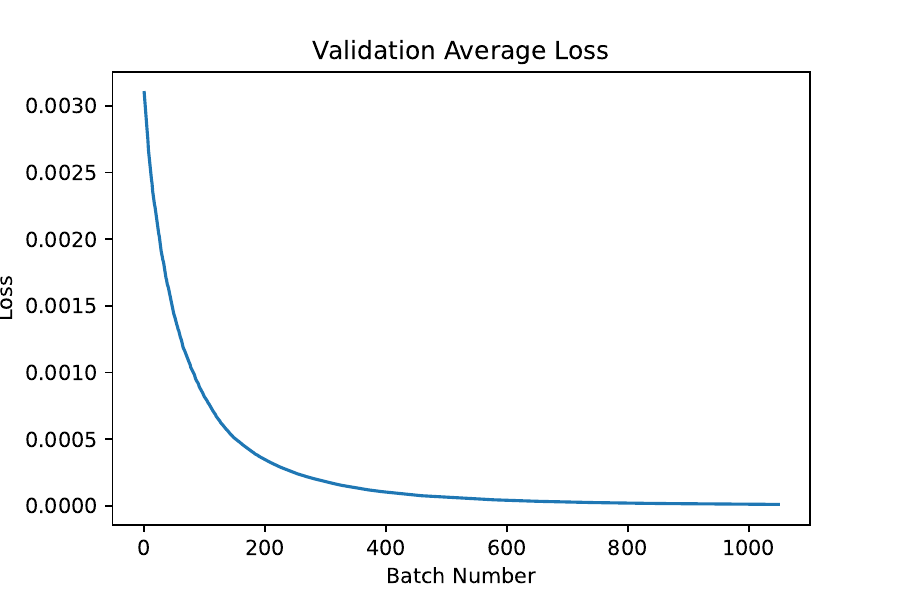}
    \caption{Average loss for $C_2$ symmetry.}
    \label{fig:c2}
\end{figure}

We see that the periodicity of the rotation operators allows for several choices of the $\theta_i$ which produce equivariance. For example, when $\theta_0=\theta_1+2\pi n$ and $\theta_2=\theta_3+2\pi m$ for some $m,n\in\mathbb{Z}$, it follows that $\mathcal{E}_\theta$ is an equivariant map between the representations. Alternatively, if each of the pairs $\theta_0,\theta_1$ and $\theta_2,\theta_3$ differs by an odd multiple of $\pi$, the resulting signs cancel and we again achieve equivariance. 

In our simulation we measure with respect to the observable $X\otimes X\otimes Z\otimes Z$, which commutes with the operation  $\textnormal{SWAP}_{01}\textnormal{SWAP}_{23}$ and is therefore $G$-invariant, and find that the model learns the parameters $\theta_0 \approx 3.07\approx\pi$, $\theta_1\approx6.18\approx 2\pi$,  $\theta_2\approx4.71\approx 3\pi/2$, and $\theta_3\approx1.65\approx\pi/2$, which aligns with our expectation. The circuit is shown in Figure~\ref{fig:c2_circ}, with the parameter curves and a plot of the average validation loss shown in Figure~\ref{fig:c2params} and Figure~\ref{fig:c2} respectively.

\subsection*{Example: $C_2\times C_2$ Symmetry} 

Let us add an independent symmetry to the mix. The flip $F_h$ about the horizontal axis in the $2\times2$ line classification problem is independent from the vertical flip $F_v$, and we will now endeavor to take both into account. The group generated by these independent flips is a direct product of cyclic groups $C_2\times C_2$. We will make use of the same embedding as for the $C_2$ symmetry: $\mathcal{E}_\theta(x)=R_Y(\theta_0+x_0)\otimes R_Y(\theta_1+x_1)\otimes R_Y(\theta_2+x_2)\otimes R_Y(\theta_3+x_3)$, but this time we expect the parameters to train differently. Indeed, the representation of $C_2\times C_2$ acting on our Hilbert space will be given by $W(F_v)=\textnormal{SWAP}_{01}\textnormal{SWAP}_{23}$ and $W(F_h)=\textnormal{SWAP}_{02}\textnormal{SWAP}_{13}$, so that the unitary applied after the embedding must commute with these two operators. Of course, any choice of a single qubit gate applied simultaneously to all qubits will do the trick. Thus, we take our $C_2\times C_2$-invariant unitary to be $U=Y^{\otimes4}$ and we take our observable to be $Z^{\otimes4}$.

\begin{figure}[h!]
\centering
\includegraphics[
width=3.5in,right
]{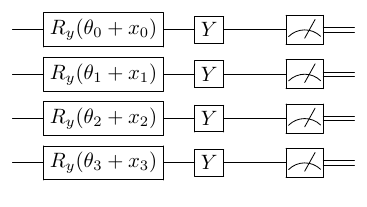}
\caption{Circuit for learning equivariance between two representations of $C_2\times C_2$. Rotation gates $R_y(\theta_i)$ are randomly initialized and then updated throughout the learning process.}
\label{fig:c2c2_circ}
\end{figure}


\begin{figure}[H]
\centering
\includegraphics[
width=3.5in
]{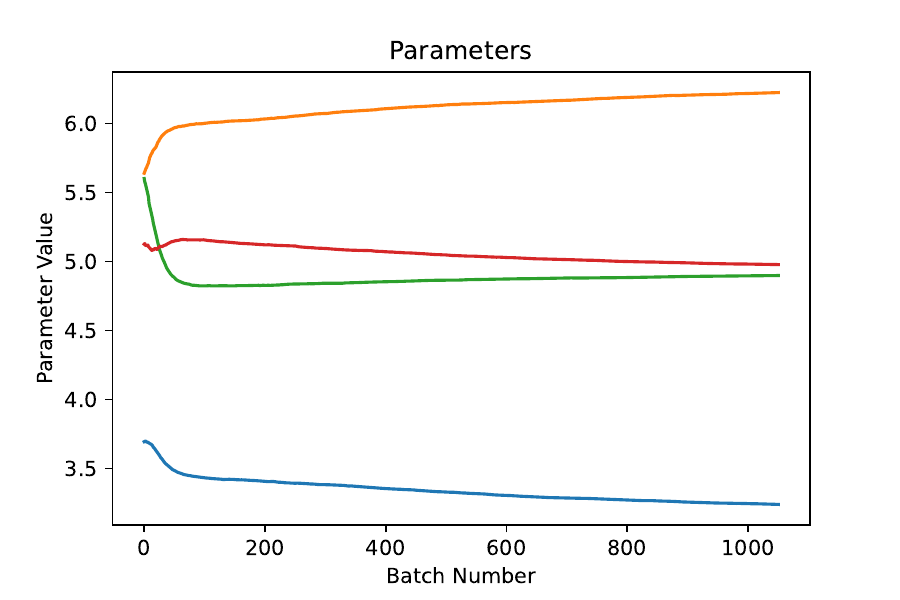}
\caption{Parameter curves for $C_2\times C_2$ symmetry.}
\label{fig:c2c2params}
\end{figure}

\begin{figure}[H]
\centering
\includegraphics[
width=3.5in
]{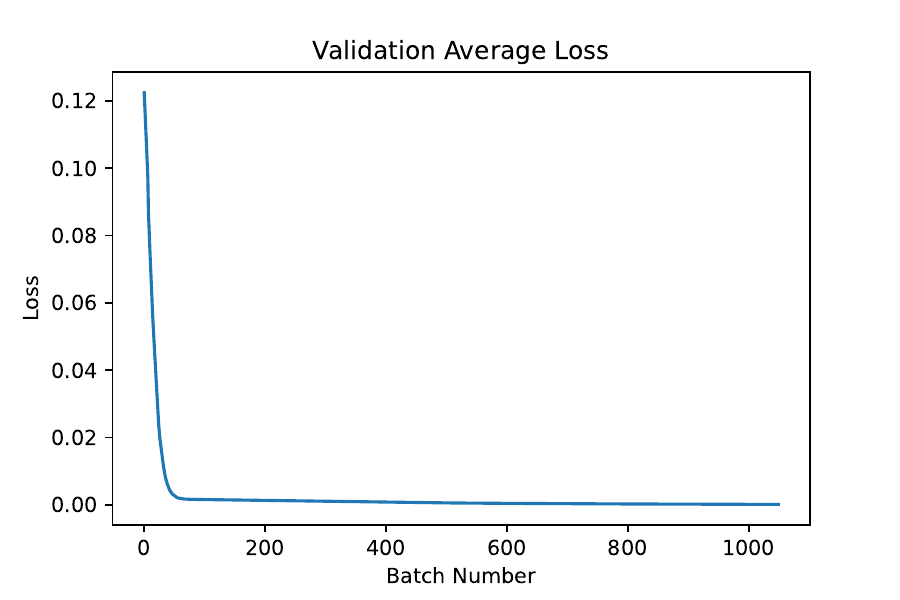}
\caption{Average loss for $C_2\times C_2$ symmetry.}
\label{fig:c2c2loss}
\end{figure}

Using the same reasoning as for the $C_2$ case, we expect to learn parameters which are equivalent up to some possible phase differences which cancel overall. This is indeed what we find, as the model learns the parameters $\theta_0 \approx 3.24\approx\pi$, $\theta_1\approx6.22\approx2\pi$,  $\theta_2\approx4.89\approx3\pi/2$, and $\theta_3\approx4.97\approx3\pi/2$. The latter two parameters are roughly equivalent, while the former two parameters are about $\pi/2$ from the latter parameter values in opposite directions, thereby canceling signs. See Figure~\ref{fig:c2c2_circ} for the circuit, Figure~\ref{fig:c2c2params} for the parameter curves, and Figure~\ref{fig:c2c2loss} for a plot of the average validation loss.

\subsection*{Example: $D_4$ Symmetry} 

Consider a $3\times3$ checkerboard distribution of points in the plane. This distribution demonstrates symmetry with respect to the dihedral group $D_4$ of order 8. In Figure~\ref{fig:checkerboard}, we plot several sample points which are separated into classes in alternating squares, forming just such a grid. Observe that a rotation $r$ by 90 degrees leaves this classification invariant, as does a flip $F$ about the diagonal line $y=x$. These symmetries together generate $D_4$, and so we take this to be our first representation.

\begin{figure}[H]
\centering
\includegraphics[
width=3.5in
]{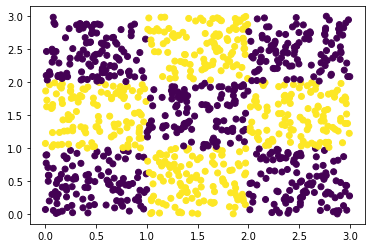}
\caption{Scatter plot of several data points in the checkerboard problem. Points having the same color have the same label. }
\label{fig:checkerboard}
\end{figure}

For the second representation, we will take the standard representation on four qubits, which is obtained by first identifying $D_4$ with the subgroup of the symmetric group $S_4$ generated by the permutations $(1\ 2\ 3\ 4)$ and $(1\ 2)(3\ 4)$ and then identifying these permutations with the corresponding permutation operators which act on four qubits. Thus, for the second representation we take $W(r)=\textnormal{SWAP}_{23}\textnormal{SWAP}_{12}\textnormal{SWAP}_{01}$ and $W(F)=\textnormal{SWAP}_{01}\textnormal{SWAP}_{23}$. Some operators which commute with these generators and are therefore $D_4$-invariant include a tensor product of single qubit operators $U^{\otimes4}$ as well as the permutation $\textnormal{SWAP}_{02}\textnormal{SWAP}_{13}$. Let us therefore apply these operations with $U=Y$ and measure with respect to $Z^{\otimes 4}$. For the architecture of the embedding, we apply $R_Z(\theta_{i+4}+x_{(i+1)\!\!\mod2})R_Y(\theta_i + x_{i\!\!\mod2})$ to the $i$-th qubit. See Figure~\ref{fig:d4_circ} for the circuit.

\begin{figure}[H]
\centering
\includegraphics[
width=3.5in
]{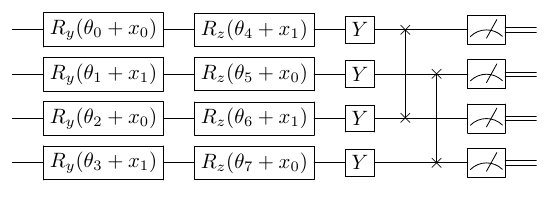}
\caption{Circuit for learning equivariance between two representations of $D_4$ using both $R_y$ and $R_z$ rotation gates. The parameters are randomly initialized and the data is encoded as $x=(x_0,x_1)$.}
\label{fig:d4_circ}
\end{figure}

\begin{figure}[h]
\centering
\includegraphics[
width=3.5in
]{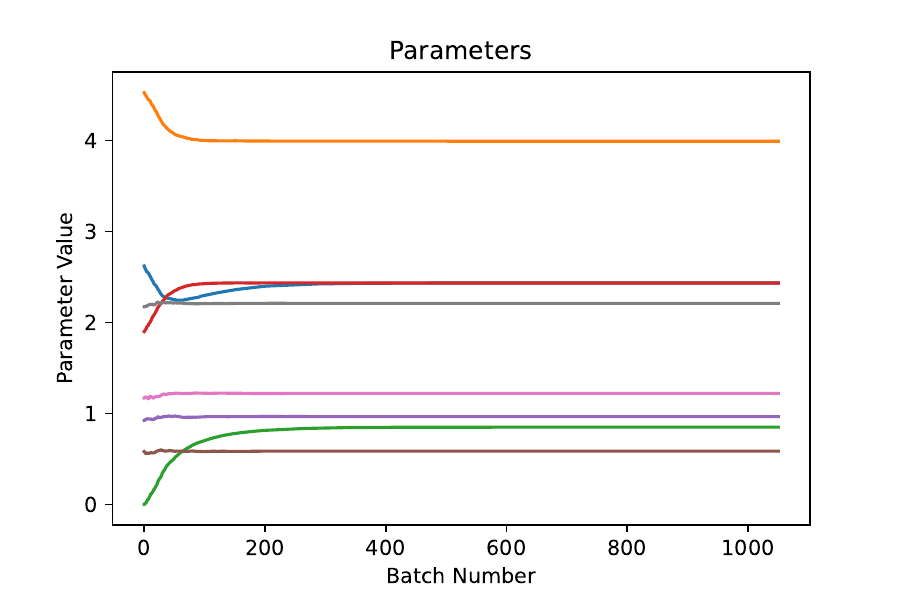}
\caption{Parameter curves for $D_4$ symmetry.}
\label{fig:d4params}
\end{figure}

The validation loss and parameter curves are shown in Figures~\ref{fig:d4loss} and \ref{fig:d4params}, respectively. 
The resulting embedding after training is nearly equivariant. Indeed, taking the data point $(0.40, 0.54)$, its rotation, and its reflection, and running the circuit with the trained parameters produces the nearly identical expected values $-0.05429996$, $-0.05429986$, and $-0.05430002$, respectively.

\begin{figure}[h]
\centering
\includegraphics[
width=3.5in
]{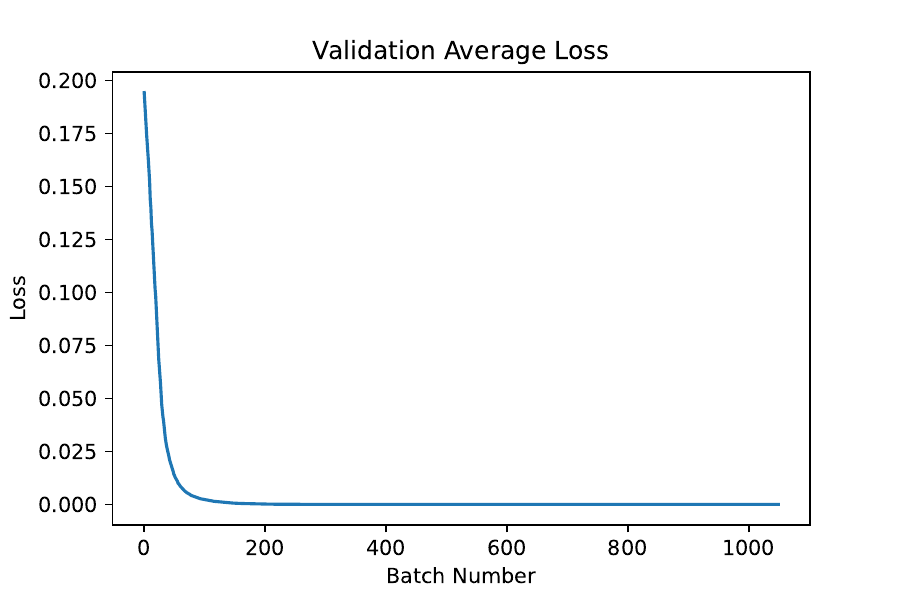}
\caption{Average loss for $D_4$ symmetry.}
\label{fig:d4loss}
\end{figure}

\begin{figure}[b]
\centering
\includegraphics[
width=3.5in
]{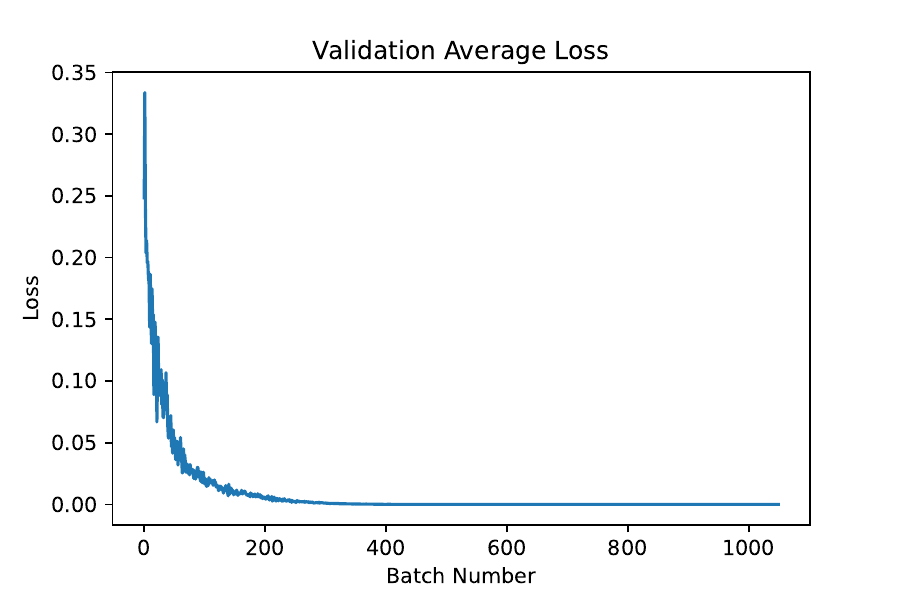}
\caption{Average loss for $S_6$ symmetry.}
\label{fig:s6loss}
\end{figure}

\begin{figure}[b]
\centering
\includegraphics[
width=3.5in
]{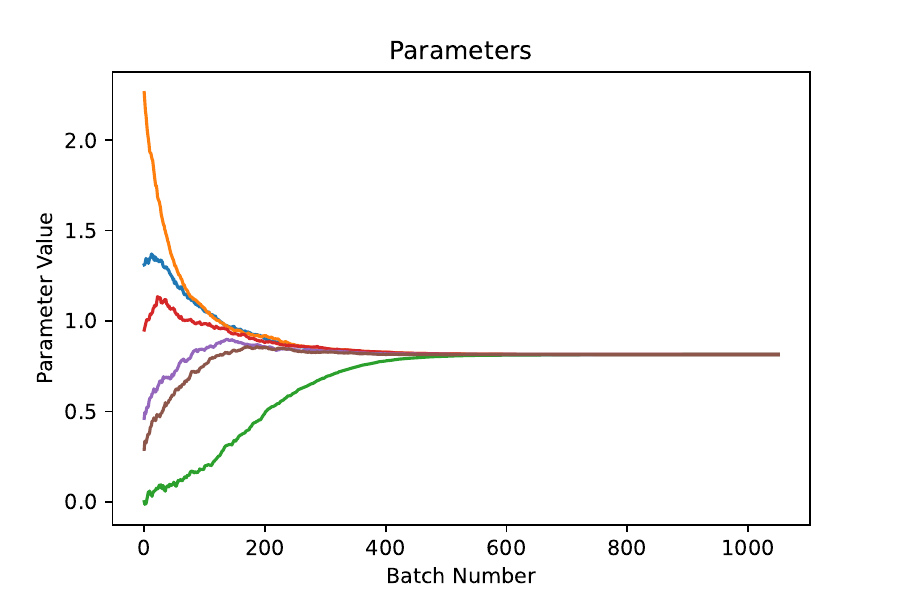}
\caption{Parameter curves for $S_6$ symmetry.}
\label{fig:s6params}
\end{figure}

\subsection*{Example: $S_6$ Symmetry}
Consider the symmetric group $S_6$ on 6 letters. A problem which possesses this symmetry is the classification of graphs with 6 nodes as being either connected or disconnected. A \textit{graph} is a collection of \textit{vertices} $V$ and \textit{edges} $E\subset V\times V$. When $(v,w)\in E$ if and only if $(w,v)\in E$ holds for every $v,w\in V$, we say that the graph is \textit{undirected}. Otherwise, the graph is called \textit{directed} or a \textit{digraph}. We say that a given graph is \textit{connected} if it has the property that for any vertices $v,w$, there is a sequence of edges $(v_i,w_i)$ with $i=0,1,\ldots,n$ such that $v=v_0$, $w_n=w$, and $w_i=v_{i+1}$ for all $i=0,\ldots,n-1$. Intuitively, the graph is connected if there is a path connecting any two vertices. Notice that the labeling of the vertices is arbitrary. We may choose to relabel the vertices with a permutation $\varphi:V\to V$ so long as vertices which are connected by an edge before the permutation remain connected after the permutation (equivalently, the edges are relabeled in precisely the same way). That is, we require that $\varphi$ is a bijection with the property that $(\varphi(v),\varphi(w))$ is an edge if and only if $(v,w)$ is an edge. Such a map is called a \textit{graph isomorphism}, and it can be shown that it preserves the property of connectedness. 

Since permutations of the vertices are equivalent to graph isomorphisms, we take the collection of such maps as our representation of $S_6$ in the data space. In the Hilbert space, we take the standard representation of $S_6$ given by permuting tensor products according to the given permutation in $S_6$. Since this group is generated by the elements $(1\ 2)$ and $(1\ 2\ 3\ 4\ 5\ 6)$, we see that the representation in Hilbert space is given by mapping $(1\ 2)$ to $\textnormal{SWAP}_{01}$ and mapping $(1\ 2\ 3\ 4\ 5\ 6)$ to $\textnormal{SWAP}_{01}\textnormal{SWAP}_{12}\textnormal{SWAP}_{23}\textnormal{SWAP}_{34}\textnormal{SWAP}_{45}$. Certainly a tensor power $U^{\otimes6}$ of any single qubit operator $U$ commutes with these generators and is therefore $S_6$-invariant. For this reason, we take $Z^{\otimes6}$ as our observable and apply the operation $U =Y^{\otimes6}$ 
after the embedding, which we give the following simple architecture: if $(i,j)$ is an edge in the graph with $i\ne j$, apply a controlled rotation with angle $\theta_i$ controlled off of qubit $i$ and targeting qubit $j$, and if $i=j$, then simply apply $R_Y(\theta_i)$ to qubit $i$. The circuit is shown in Figure~\ref{fig:s6_circ} for a particular graph defined by the notation \{0: [3, 0, 5, 4],
 1: [1, 5, 3, 0, 2],
 2: [5, 3, 2, 4],
 3: [5],
 4: [4, 2],
 5: [2, 5, 1, 3]\},
which indicates, for example, that there is an edge between vertex 0 and vertices 3, 0, 5, and 4.

The validation loss and parameter curves after training are shown in Figures~\ref{fig:s6loss} and \ref{fig:s6params}, respectively. The parameters converged to the same value, which is what we expect since any relabeling of the vertices (maintaining adjacency) gives rise to an equivalent graph.

\begin{figure*}[htp]
\centering
\includegraphics[
width=\textwidth
]{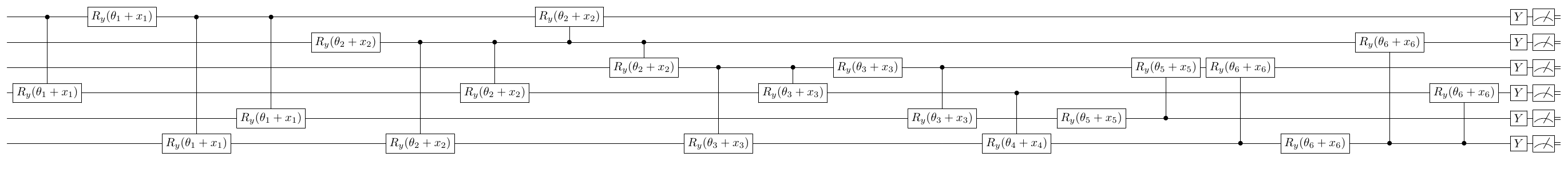}
\caption{Circuit for learning equivariance between two representations of $S_6$. In this figure, the data is taken to be the graph \{0: [3, 0, 5, 4],
 1: [1, 5, 3, 0, 2],
 2: [5, 3, 2, 4],
 3: [5],
 4: [4, 2],
 5: [2, 5, 1, 3]\} and the parameters are randomly initialized.}
\label{fig:s6_circ}
\end{figure*}

\section{Learning Intertwining Maps}\label{sec:intertwine}

We call a linear equivariant map between two representations an intertwining map. Such maps preserve both the linear vector space structure and the symmetry structure afforded by the representation. When the intertwining map is also a bijection, we call it an isomorphism of representations, and the representations are called equivalent. In representation theory, we seek to classify all irreducible representations according to this equivalence relation.

A limitation of the approach in Section~\ref{sec:equivariant} is that an embedding of the data space into Hilbert space is necessarily nonlinear, and so such an equivariant map cannot transfer the linear structure of $\mathcal{R}$ onto $\mathcal{H}$. Indeed, if the embedding were linear, then $\mathcal{E}(2x)=2\mathcal{E}(x)$. But if $\mathcal{E}(x)\in\mathcal{U}(\mathcal{H})$, then $2\mathcal{E}(x)\not\in\mathcal{U}(\mathcal{H})$, a contradiction. Fortunately, there is a way around this issue.

Suppose that we are given a dataset with some symmetry described by the representation $V:G\to GL(\mathcal{R})$. Let us denote by $\mathcal{E}:\mathcal{R}\to\mathcal{U}(\mathcal{H})$ the amplitude embedding, so that $T:\mathcal{R}\to\mathcal{H}$ defined by $T(x)=\mathcal{E}(x)\ket{\vec{0}}$ is the map sending each data element to its corresponding quantum state. While the amplitude embedding $\mathcal{E}$ is not linear, its unnormalized counterpart is. Indeed, the map $\Hat{\mathcal{E}}:\mathcal{R}\to\mathcal{U}(\mathcal{H})$ defined by $\Hat{\mathcal{E}}(x)=\|x\|\mathcal{E}(x)$ produces the linear operator which performs an unnormalized amplitude embedding, and so the Hilbert space level linear operator that we are looking for is $\Hat{T}:\mathcal{R}\to\mathcal{H}$ defined by $\Hat{T}=\Hat{\mathcal{E}}(x)\ket{\vec{0}}$. The trouble is that $\Hat{\mathcal{E}}(x)$ is not a unitary operator, and so we cannot implement it in a straightforward way in our quantum circuit. Instead, we resort to the following proposition linking the \textit{linear} equivariance of $\hat{T}$ to the equivariance of $T$.

\begin{lemma}\label{lem:amplitude}
    Let $T(x)=\mathcal{E}(x)\ket{\vec{0}}$ where $\mathcal{E}$ is the normalized amplitude embedding, and let $\Hat{T}(x)=\Hat{\mathcal{E}}(x)\ket{\vec{0}}$ where $\Hat{\mathcal{E}}$ is the unnormalized amplitude embedding. Then $\Hat{T}$ is an intertwining map between unitary representations $V$ and $W$ if and only if $T$ is equivariant between the same representations.
\end{lemma}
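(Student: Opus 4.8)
The plan is to exploit the elementary algebraic relationship $\hat{T}(x)=\|x\|\,T(x)$, which follows immediately from the definitions $\hat{\mathcal{E}}(x)=\|x\|\,\mathcal{E}(x)$ and $T(x)=\mathcal{E}(x)\ket{\vec{0}}$. Because $\hat{T}$ is the unnormalized amplitude embedding, it sends a data vector $x$ to the computational-basis vector whose entries are the coordinates of $x$; hence $\hat{T}$ is manifestly linear, and this holds with no reference to any group action. Consequently the only part of the ``intertwining'' condition that is not automatic is the equivariance relation $\hat{T}\circ V(g)=W(g)\circ\hat{T}$, so the proof reduces to showing that this equivariance is equivalent to the equivariance of $T$. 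The single fact that makes the two conditions interchangeable is that $V$ is a \emph{unitary} representation, so that $\|V(g)x\|=\|x\|$ for every $g\in G$ and every $x\in\mathcal{R}$.

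For the forward direction I would assume $\hat{T}$ is intertwining and evaluate $\hat{T}(V(g)x)$ in two ways. On one hand, equivariance of $\hat{T}$ gives $\hat{T}(V(g)x)=W(g)\hat{T}(x)=\|x\|\,W(g)T(x)$. On the other hand, the defining relation together with norm preservation gives $\hat{T}(V(g)x)=\|V(g)x\|\,T(V(g)x)=\|x\|\,T(V(g)x)$. Equating these and cancelling the scalar $\|x\|$ for $x\neq 0$ yields $T(V(g)x)=W(g)T(x)$, which is the equivariance of $T$. For the converse I would start from $T(V(g)x)=W(g)T(x)$, multiply through by $\|x\|$, and absorb the scalar into each side: by norm preservation the left-hand side becomes $\|V(g)x\|\,T(V(g)x)=\hat{T}(V(g)x)$, while by linearity of the operator $W(g)$ the right-hand side becomes $W(g)\bigl(\|x\|\,T(x)\bigr)=W(g)\hat{T}(x)$. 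Combined with the linearity of $\hat{T}$ already noted, this establishes that $\hat{T}$ is an intertwining map.

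The main obstacle is really just careful bookkeeping of the normalization factor, and the crux is recognizing that the cancellation of $\|x\|$ is legitimate precisely because $V$ is norm-preserving; were $V$ an arbitrary element of $GL(\mathcal{R})$, the scalar $\|V(g)x\|$ would not collapse to $\|x\|$ and the equivalence would break down. A minor point requiring attention is the excluded vector $x=0$, at which $T$ is undefined because of the normalization. I would handle this by understanding both equivariance conditions on the nonzero vectors of $\mathcal{R}$ (equivalently, on the projectivized data space), so that the division by $\|x\|$ in the forward direction is always valid wherever the maps are defined.
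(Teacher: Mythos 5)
Your proposal is correct and follows essentially the same route as the paper's proof: both hinge on the identity $\hat{T}(x)=\|x\|\,T(x)$, the automatic linearity of the unnormalized amplitude embedding, and the fact that unitarity of $V$ forces $\|V(g)x\|=\|x\|$ so the normalization factors cancel. Your explicit handling of the $x=0$ case is a small point of added care that the paper's chain of equivalences passes over silently.
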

\begin{proof}
    It is clear that $\Hat{T}$ is linear as it maps the vector $x_0e_0+\cdots+x_ne_n\in\mathcal{R}$ to $x_0\ket{0}+\cdots+x_n\ket{n}\in\mathcal{H}$. Let $V$ be the representation on $\mathcal{R}$ and let $W$ be the representation on $\mathcal{H}$. Then for all $g\in G$, we have
    \begin{equation}
        W(g)\Hat{T}(x)=\Hat{T}(V(g)x)
    \end{equation}
    if and only if
    \begin{equation}
        \frac{1}{\|x\|}W(g)\Hat{T}(x)=\frac{1}{\|x\|}\Hat{T}(V(g)x),
    \end{equation}
    which is equivalent to
    \begin{equation}
        W(g)T(x)=\frac{\|V(g)x\|}{\|x\|}T(V(g)x).
    \end{equation}
    Thus, if $V$ is unitary, then the assumption that $\Hat{T}$ is an intertwining map is equivalent to the condition
    \begin{equation}
        W(g)T(x)=T(V(g)x),
    \end{equation}
    which is precisely the condition that $T$ is an equivariant map.
\end{proof}

Lemma~\ref{lem:amplitude} assumes that the embedding is the usual unparameterized amplitude embedding, and so its linear extension cannot learn an intertwining map between representations. Instead, we insert a parameterized unitary operator which we will use to learn equivariance between the representation induced by $\Hat{T}$ and another representation $W$ acting on the same Hilbert space. Let us denote by $A:G\to\mathcal{U}(\mathcal{H})$ the representation induced by $\Hat{T}$, which we recall satisfies
\begin{equation}
    \Hat{T}\circ V(g) = A(g)\circ\Hat{T}.
\end{equation}

We begin by performing the amplitude embedding on our data, after which we apply a parameterized unitary operator $U_\theta$. We then perform an operator $U_{inv}$ which is $G$-invariant with respect to the $W$ representation and measure with respect to an observable $\mathcal{O}$ that is likewise $G$-invariant. Let us again denote by $h_\theta$ the expected value of this circuit. We now claim that when $U_\theta$ is an equivariant map between $A$ and $W$, the estimate $h_\theta$ is $G$-invariant with respect to $V$. 

\begin{proposition}
    Let $V:G\to\mathcal{U}(\mathcal{R})$ and $W:G\to\mathcal{U}(\mathcal{H})$ be representations. The estimate
    \begin{equation}
        h_\theta(x)=\tr\left[U_{inv}U_\theta T(x)(U_{inv}U_\theta T(x))^\dagger\mathcal{O}\right]
    \end{equation}
    is $G$-invariant (i.e. $h_\theta(V(g)x)=h_\theta(x)$) whenever the unitary $U_{inv}$ and the observable $\mathcal{O}$ are $G$-invariant with respect to $W$ and $U_\theta$ is an intertwining map between $W$ and the representation $A$ induced by $\Hat{T}$.
\end{proposition}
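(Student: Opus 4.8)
The plan is to reproduce the twirling computation of Proposition~\ref{prop:symmetric}, with the one genuinely new ingredient being a translation of the hypothesis onto the induced representation $A$ by way of Lemma~\ref{lem:amplitude}. The statement hands us that $U_\theta$ intertwines $A$ and $W$ (so that $U_\theta A(g) = W(g) U_\theta$), that $U_{inv}$ and $\mathcal{O}$ commute with every $W(g)$, and that $V$ is unitary; the goal is to funnel all of these through the expression for $h_\theta(V(g)x)$ and collapse it back to $h_\theta(x)$.

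First I would establish that $T$ is equivariant with respect to the pair $(V,A)$. By definition, $A$ is the representation induced by $\hat{T}$, which means $\hat{T}\circ V(g) = A(g)\circ\hat{T}$ for all $g$; in other words, $\hat{T}$ is a linear intertwining map between $V$ and $A$. Since $V$ is assumed unitary, Lemma~\ref{lem:amplitude} applies verbatim with the role of $W$ played by $A$, and yields the equivariance of the normalized map $T$:
\begin{equation}
T(V(g)x) = A(g)\,T(x).
\end{equation}
This is the crux of the argument and the step I expect to be the main obstacle, since it is where the normalization factors that distinguish $T$ from $\hat{T}$ must cancel against the unitarity of $V$; everything after this is a mechanical chain of substitutions.

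With this in hand, I would expand $h_\theta(V(g)x)$, substitute $T(V(g)x)=A(g)T(x)$, and then use the intertwining property $U_\theta A(g) = W(g)U_\theta$ to convert the $A(g)$ into a $W(g)$ sitting between $U_{inv}$ and $U_\theta$. Next, the $G$-invariance of $U_{inv}$ with respect to $W$ lets me commute $W(g)$ past $U_{inv}$, so that, writing $\psi = U_{inv}U_\theta T(x)$, the trace becomes $\tr[W(g)\psi\psi^\dagger W(g)^\dagger \mathcal{O}]$. Finally, the cyclicity of the trace together with the $G$-invariance of $\mathcal{O}$ (which gives $W(g)^\dagger \mathcal{O} W(g) = \mathcal{O}$ upon invoking the unitarity of $W$) eliminate the $W(g)$ factors and return $\tr[\psi\psi^\dagger\mathcal{O}] = h_\theta(x)$. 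Structurally this tail is identical to the computation in Proposition~\ref{prop:symmetric}, so I would not belabor it.
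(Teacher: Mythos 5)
Your proposal is correct and follows essentially the same route as the paper's own proof: it invokes Lemma~\ref{lem:amplitude} (with $A$ in place of $W$, using the unitarity of $V$) to pass from the intertwining of $\Hat{T}$ to the equivariance $T(V(g)x)=A(g)T(x)$, then pushes $A(g)$ through $U_\theta$ to become $W(g)$, commutes it past $U_{inv}$, and cancels it against $\mathcal{O}$ via cyclicity of the trace and unitarity of $W$. No gaps; the step you flag as the crux is exactly the one the paper also leans on.
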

\begin{proof}
Indeed, we have
\begin{equation}
    h_\theta(V(g)x)=\tr\left[U_{inv}U_\theta T(V(g)x)\left(U_{inv}U_\theta T(V(g)x)\right)^\dagger\mathcal{O}\right].
\end{equation}
Since $A$ is induced by $\Hat{T}$, we have by Lemma~\ref{lem:amplitude} that $T$ is an equivariant map between $V$ and $A$, and it follows that
\begin{equation}
    =\tr\left[U_{inv}U_\theta A(g)T(x)\left(U_{inv}U_\theta A(g)T(x)\right)^\dagger\mathcal{O}\right].
\end{equation}
Now applying the equivariance of $U_\theta$ produces
\begin{equation}
    =\tr\left[U_{inv}W(g)U_\theta T(x)\left(U_{inv}W(g)U_\theta T(x)\right)^\dagger\mathcal{O}\right].
\end{equation}
The $G$-invariance of $U_{inv}$ then gives
\begin{align}
    &=\tr\left[W(g)U_{inv}U_\theta T(x)\left(W(g)U_{inv}U_\theta T(x)\right)^\dagger\mathcal{O}\right]\\
    &=\tr\left[W(g)U_{inv}U_\theta T(x)(T(x))^\dagger U_\theta^\dagger U_{inv}^\dagger (W(g))^\dagger\mathcal{O}\right]\\
    &=\tr\left[U_{inv}U_\theta T(x)(T(x))^\dagger U_\theta^\dagger U_{inv}^\dagger (W(g))^\dagger\mathcal{O}W(g)\right],
\end{align}
where in the last equality, we have applied the cyclicity of the trace. Now applying the $G$-invariance of the observable $\mathcal{O}$ produces
\begin{align}
    &=\tr\left[U_{inv}U_\theta T(x)(T(x))^\dagger U_\theta^\dagger U_{inv}^\dagger (W(g))^\dagger W(g)\mathcal{O}\right]\\
    &=\tr\left[U_{inv}U_\theta T(x)(T(x))^\dagger U_\theta^\dagger U_{inv}^\dagger\mathcal{O}\right]\\
    &=\tr\left[U_{inv}U_\theta T(x)(U_{inv}U_\theta T(x))^\dagger\mathcal{O}\right]\\
    &=h_\theta(x).
\end{align}
This completes the proof.
\end{proof}

To train the equivariance of $U_\theta$, we sample from both the data space and the group and minimize the loss function
\begin{equation}
    L_g(x)=( h_\theta(V(g)x)-h_\theta(x)) ^2
\end{equation}
just as in Section~\ref{sec:equivariant}, and again optimize using one of the many available optimization routines. Thus, we have a procedure for learning an intertwining map between the representation $A$ induced by $\Hat{T}$ and the given representation $W$. 


\section{Conclusion}\label{sec:conclusion}

We have presented a novel approach to learning equivariant maps between representations of finite groups using variational quantum circuits. By parameterizing the embedding of the data space into Hilbert space, and training it to respect the inherent symmetries of the system, we have demonstrated a method for capturing and preserving group symmetries in quantum computations. Our exploration of various finite groups highlights the versatility and effectiveness of this approach, and each example showcases the capacity of variational quantum circuits to learn and respect different symmetries.

It should be noted that the choice of architecture in the embedding can have a profound influence on the ability of a model to learn equivariant maps. Indeed, an architecture which is too limited in scope may be incapable of representing an equivariant map between two particular representations. This can happen if there are too few parameters in the model or the gate set used is not expressive enough. The trivial example is a standard embedding procedure, which is not parameterized at all and therefore cannot learn an equivariant map. On the other hand, an architecture that is too broad in scope might hinder the ability of the model to train. This can happen, for example, if there are too many parameters in the model or the gate set is overly expressive. A well-designed architecture strikes a balance between expressiveness and trainability, avoiding limitations due to insufficient parameters while mitigating the challenges of over-parameterization.

Our findings contribute to the field of geometric quantum machine learning by showing that an important aspect of this theory, the embedding, can itself be learned. Moreover, this work provides a method for measuring the similarity between two representations, paving the way for further advancements in symmetry-based quantum information processing. These results demonstrate the potential for variational quantum circuits to be powerful tools for symmetry-respecting tasks.
\section*{Acknowledgements}
MLL, ZPB, and ENE acknowledge support from the DoD Smart Scholarship and OUSD(R\&E) SMART SEED grant. 

This document has been approved for public release, Distribution A. Distribution is unlimited.

\bibliographystyle{unsrt}
\bibliography{main}

\end{document}